\newtheorem{proposition}{Proposition}[section]
\newtheorem{corollary}{Corollary}[section]
\newtheorem{lemma}{Lemma}[section]
\newtheorem{theorem}{Theorem}[section]
\newtheorem*{note}{Note}
\providecommand{\keywords}[1]{\textbf{\textit{Keywords--}} #1}
\begin{document}

\title{On the long-time asymptotics of quantum dynamical semigroups}

\author{G. A. RAGGIO$^*$ and P. R. ZANGARA}

\affil{FaMAF, Universidad Nacional de C\'ordoba,\\ C\'ordoba, C\'ordoba X5000, Argentina\\
$^*$E-mail: raggio@famaf.unc.edu.ar}


\date{}

\maketitle

\begin{abstract}
We consider semigroups $\{\alpha_t: \; t\geq 0\}$ of normal, unital,  completely positive maps $\alpha_t$ on a von Neumann algebra ${\mathcal M}$. The (predual) semigroup $\nu_t (\rho ):= \rho \circ \alpha_t$ on normal states $\rho$  of $\mathcal M$ leaves invariant the face ${\mathcal F}_p:= \{\rho : \; \rho (p)=1\}$ supported by the projection $p\in {\mathcal M}$, if and only if $\alpha_t(p)\geq p$ (i.e., $p$ is sub-harmonic). We complete the arguments showing that the sub-harmonic projections form a complete lattice. We then consider $r_o$, the smallest projection which is larger than each support of a minimal invariant face; then $r_o$ is subharmonic. In finite dimensional cases   $\sup \alpha_t(r_o)={\bf 1}$ and $r_o$ is also the smallest projection $p$ for which $\alpha_t(p)\to {\bf 1}$. If  $\{\nu_t: \; t\geq 0\}$  admits a faithful family of normal stationary states then $r_o={\bf 1}$ is useless; if not, it helps to reduce the problem of the asymptotic behaviour of the semigroup for large times.
\end{abstract}

\keywords{Quantum dynamical semigroups; sub-harmonic projections; long-time asymptotics.}


\section{Introduction and preliminaries}

We consider a von Neumann algebra $\mathcal M $ and denote its normal state space by $\mathcal S$. A \textit{quantum dynamical semigroup} $\{\alpha_t:\; t\geq 0\}$ is a family of normal, unital, positive, linear maps $\alpha_t:{\mathcal M} \to {\mathcal M}$ with the property $\alpha_t\circ \alpha_s = \alpha_{t+s}$ where $\alpha_0$ is the identity.  Then, the map $\nu_t: {\mathcal S}\to {\mathcal S}$ defined by $\nu_t( \rho )= \rho \circ \alpha_t$ is affine, $\nu_0$ is the identity, and $\nu_t\circ \nu_s=\nu_{t+s}$. Conversely, given a semigroup $\{\nu_t : \; t\geq 0\}$ of affine maps on $\mathcal S$, the dual maps are a positive quantum dynamical semigroup.\\

One often demands on physical grounds, that  $\alpha_t$ be completely positive. 
When ${\mathcal M}$ is the algebra of all bounded linear operators on a Hilbert space, if the dynamical semigroup is strongly continuous in $t$, and each $\alpha_t$ is completely positive, the generator has the canonical GKS-Lindblad form.\\

The long-time asymptotics of such semigroups has been studied  in the 1970's and in the 1980's, after pioneering papers of E.B. Davies \cite{Da}\,, culminating with the work of Frigerio\cite{Fri1,Fri2,FrVe}\,, and U. Groh\cite{Gr}\,. More recent studies are due to  Fagnola \& Rebolledo\cite{FaRe1,FaRe2,FaRe3}\,, Umanit\'a\cite{Um}\, ,  Mohari\cite{Mo1,Mo2} \,  and Baumgartner \& Narnhofer\cite{BaNa}\,. We refer to Ref.~\citen{FaRe3} for a recent overview.
In pertinent cases,  the asympotics can be studied via  the GKS-Lindblad generator.\\

In this note, all projections are ortho-projections (self-adjoint equal to its square). $\mathbf 1$ denotes the identity operator and for a projection $p$, $p^{\perp}={\mathbf 1}-p$. Limits in $\mathcal M$ are invariably in the $w^*$-topology. All states (positive linear functionals of unit norm) are normal. Limits of states are with respect to the distance induced by the norm. But recall that the norm-closure of a convex set of states coincides with its weak-closure. The \textit{support} of a state $\rho$ --written $s_{\rho}$-- is the smallest projection $p\in {\mathcal M}$ such that $\rho (p)=1$.\\
We will consider a quantum dynamical semigroup $\{\alpha_t:\; t\geq 0\}$ and will always explicitly mention any additional positivity hypotheses.
In particular,   if each $\alpha_t$ is completely positive, we say that the semigroup is CP. A state $\omega$ is \textit{stationary} if $\nu_t(\omega)=\omega \circ \alpha_t=\omega$.

\section{Invariant faces and sub-harmonic projections}

A \textit{face} is a convex subset ${\mathcal F}$ of ${\mathcal S}$ which is stable under convex decomposition: if $ t \rho + (1-t) \mu \in {\mathcal F}$ for $0<t<1$ with $\rho ,\mu \in {\mathcal S}$ then $\rho, \mu \in {\mathcal F}$. If $p\in {\mathcal M}$ is a projection then ${\mathcal F}_p:= \{ \rho \in {\mathcal S}:\; \rho (p)=1\}$ is a closed face; we say it is the face \textit{supported} by $p$.
It is not so obvious but true \cite{Sa,AsEl}\,, that every closed face is of this form; i.e. it is the face supported by some projection.
Clearly, ${\mathcal F}_p\subset {\mathcal F}_q$ if and only if $p\leq q$.\\

The following result is implicit or partially explicit in the work of Fagnola \& Rebolledo and Umanit\'a.

\begin{proposition} Suppose $\nu$ is an affine map of  $\mathcal S$ into
itself and let $\alpha$ be the dual normal, linear, positive map
of $\mathcal M$ into itself. For a projection $p\in {\mathcal M}$ the following conditions are equivalent:
(1) the face ${\mathcal F}_p$ supported by $p$ is $\nu$-invariant; (2) $\alpha (p)\geq p$;
(3) $p\alpha (a)p=p\alpha (pap)p$ fore every $a\in {\mathcal M}$; (4) $\alpha (p^{\perp}ap^{\perp})= p^{\perp}\alpha (p^{\perp}ap^{\perp})p^{\perp}$ for every $a\in {\mathcal M}$.
\end{proposition}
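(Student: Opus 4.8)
The plan is to make condition (2), $\alpha(p)\ge p$, the hub and to route every argument through the observation that a normal state $\rho$ satisfies $\rho(p)=1$ exactly when it is supported by $p$, equivalently when it is a normal state of the corner $p\mathcal M p$; I will lean repeatedly on the Cauchy--Schwarz inequality $|\rho(p^\perp b)|^2\le \rho(p^\perp)\rho(b^*b)$ for states. A preliminary remark I would record first: since $\nu$ maps $\mathcal S$ into $\mathcal S$, one has $\rho(\alpha(\mathbf 1))=\nu(\rho)(\mathbf 1)=1$ for every state $\rho$, so $\alpha(\mathbf 1)=\mathbf 1$; in particular $0\le\alpha(p)\le\mathbf 1$ and $\alpha(p^\perp)=\mathbf 1-\alpha(p)$, which I will use freely.

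For (2)$\Rightarrow$(1): if $\alpha(p)\ge p$ and $\rho\in\mathcal F_p$, then $1=\rho(p)\le\rho(\alpha(p))=\nu(\rho)(p)\le1$, so $\nu(\rho)\in\mathcal F_p$. For (1)$\Rightarrow$(2): invariance gives $\rho(\alpha(p))=1$ for every normal state $\rho$ of $p\mathcal M p$; since $\rho(\alpha(p))=\rho(p\alpha(p)p)$ and $p\alpha(p)p\le p$, separation of points of the corner forces $p\alpha(p)p=p$, and then positivity of $\mathbf 1-\alpha(p)$ gives $(\mathbf 1-\alpha(p))^{1/2}p=0$, hence $\alpha(p)p=p$; writing $\alpha(p)$ in $2\times2$ block form relative to $p$, self-adjointness kills the off-diagonal blocks and leaves the $pH$-block equal to $\mathbf 1$, whence $\alpha(p)-p\ge0$.

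Now assuming (2), equivalently (1), I would derive (3) and (4). For (3), use that $\nu(\rho)\in\mathcal F_p$ means $\nu(\rho)(x)=\nu(\rho)(pxp)$ for all $x$ and all $\rho\in\mathcal F_p$; rewriting both sides through $\alpha$ and using $\rho(y)=\rho(pyp)$ yields $\rho(p\alpha(x)p)=\rho(p\alpha(pxp)p)$ for every normal state $\rho$ of $p\mathcal M p$, and separation of points gives (3). For (4) it suffices, by linearity and decomposition into positive parts, to treat $a=c\ge0$: for $\rho\in\mathcal F_p$ the Cauchy--Schwarz inequality gives $\nu(\rho)(p^\perp c p^\perp)=0$, hence $p\,\alpha(p^\perp c p^\perp)\,p=0$; since $\alpha(p^\perp c p^\perp)\ge0$ this upgrades to $p\,\alpha(p^\perp c p^\perp)=0=\alpha(p^\perp c p^\perp)\,p$, which is exactly $\alpha(p^\perp c p^\perp)=p^\perp\alpha(p^\perp c p^\perp)p^\perp$.

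Finally I close the cycle back to (2). From (3), take $a=p^\perp$: then $pap=0$, so $p\alpha(p^\perp)p=0$; positivity of $\alpha(p^\perp)$ turns this into $\alpha(p^\perp)p=0$, and together with $\alpha(p^\perp)=\mathbf 1-\alpha(p)$ and the block argument this gives $\alpha(p)\ge p$. From (4), take $a=\mathbf 1$ to get $\mathbf 1-\alpha(p)=p^\perp(\mathbf 1-\alpha(p))p^\perp$, whose block form again forces $\alpha(p)\ge p$. The main obstacle I anticipate is that $\alpha$ is only assumed positive, not $2$-positive, so the Kadison--Schwarz inequality is unavailable and one cannot compare $\alpha(b^*b)$ with $\alpha(b)^*\alpha(b)$; every compression identity must instead be extracted at the level of states via Cauchy--Schwarz, and the delicate point throughout is upgrading a diagonal vanishing $p\,y\,p=0$ to the full $p\,y=0$, which succeeds only because the relevant $y=\alpha(p^\perp c p^\perp)$ is genuinely positive.
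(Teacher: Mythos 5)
Your proof is correct; every step checks out, including the two points where care is genuinely needed: the upgrade from $p\,y\,p=0$ to $yp=0$ via the square root of the positive element $y$, and the block-form conclusion $\alpha(p)=p+p^{\perp}\alpha(p)p^{\perp}\geq p$ from $\alpha(p)p=p$. These two facts are precisely the content of the paper's appendix Lemma (Lemma A.1), which you re-prove inline, and your explicit derivation of $\alpha(\mathbf 1)=\mathbf 1$ from $\nu(\mathcal S)\subset\mathcal S$ makes visible a hypothesis the paper uses tacitly. The toolkit is the same as the paper's (Cauchy--Schwarz at the level of states, compression to the corner $p\mathcal Mp$), but your wiring differs in two substantive ways. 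First, the paper's only route from invariance to sub-harmonicity is the chain $(1)\Rightarrow(3)\Rightarrow(2)$, whereas you prove $(1)\Rightarrow(2)$ directly by testing $\alpha(p)$ itself against states supported by $p$; this isolates the geometric equivalence $(1)\Leftrightarrow(2)$ as a self-contained fact before conditions (3) and (4) enter. Second, for $(2)\Rightarrow(4)$ the paper argues order-theoretically at the algebra level: $0\leq x=p^{\perp}xp^{\perp}\leq\mathbf 1$ forces $x\leq p^{\perp}$, hence $\alpha(x)\leq\alpha(p^{\perp})\leq p^{\perp}$ by monotonicity and (2), and the Lemma converts $\alpha(x)\leq p^{\perp}$ into $p^{\perp}\alpha(x)p^{\perp}=\alpha(x)$ --- no states, no separation. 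Your version instead routes through the invariance of $\mathcal F_p$ and separation by normal states of the corner; it is slightly less economical but has the virtue of uniformity, since all your implications run on the same state-level engine. One small bookkeeping point: you repeatedly identify normal states of $p\mathcal Mp$ with states in $\mathcal F_p$; this is standard (extend $\varphi$ by $\rho(a):=\varphi(pap)$) but is doing real work in your separation arguments and deserves one explicit sentence, playing the role of the paper's device of the normalized compression $\omega_p=\omega(p\,\cdot\,p)/\omega(p)$ combined with the decomposition of normal functionals into at most four states.
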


\begin{proof} We first prove the chain $(2)\Rightarrow (1)\Rightarrow (3)\Rightarrow (2)$. If $\alpha (p) \geq p$ then, for any state $\rho$ one has
$\nu (\rho) (p) = \rho (\alpha (p)) \geq \rho (p)$. Thus, $\rho
(p)=1$ implies $\nu ( \rho ) (p) )=1$, i.e. $\nu ({\mathcal
F}_p)\subset {\mathcal F}_p$.
If $\nu ({\mathcal F}_p)\subset {\mathcal F}_p$, we show that
\[ (*)\;\;\;
\omega (p\alpha (pap)p)=\omega (p\alpha (a)p) \; \mbox{ for every $\omega \in
{\mathcal S}$}\;.\label{a}\]
Since every normal linear functional is the linear combination of at most four states, this then implies that $p\alpha (pap)p=p\alpha (a)p$.
To prove (*) observe that, by the Cauchy-Schwarz inequality
for states, the claim is trivially valid if $\omega (p)=0$. Otherwise,
consider the state $\omega_p(a):= \omega (pap)/\omega (p)$.
Clearly $\omega_p \in {\mathcal F}_p$; thus,
\[ \omega (p)^{-1}\omega (p\alpha (pap)p)= \omega_p ( \alpha (pap))=
\nu ( \omega_p ) (pap)= \nu (\omega_p)(a)\]
\[ =\omega_p (\alpha (a))=\omega (p)^{-1}\omega (p\alpha (a) p)\;;\]
 which is (*).
Finally, if $p\alpha (pap)p=p\alpha (a) p$, then $p - p\alpha (p)p=p\alpha (p^{\perp})p=0$ and Lemma \ref{pxp=p} of the Appendix implies $\alpha (p)\geq p$.\\
If $0\leq x=p^{\perp}xp^{\perp}\leq {\bf 1}$ then, by Lemma \ref{pxp=p} of the Appendix, $x\leq p^{\perp}$ and $\alpha (x) \leq \alpha (p^{\perp})$; when (2) is the case $\alpha (p^{\perp}) \leq p^{\perp}$ so that $\alpha (x)\leq p^{\perp}$ which by the aforementioned Lemma, implies $p^{\perp}\alpha (x)p^{\perp}=\alpha (x)$. For general $0\leq x=p^{\perp}xp^{\perp}$ we consider $x/\|x\|$ and obtain $p^{\perp}\alpha (x)p^{\perp}=\alpha (x)$. Since every $a\in {\mathcal M}$ is a linear combination of at most four positive elements, we conclude that (2) implies (4). But (4) implies $\alpha (p^{\perp})=p^{\perp}\alpha (p^{\perp})p^{\perp}$ which, by the same Lemma, gives $\alpha (p^{\perp} ) \leq p^{\perp}$ which is equivalent to $\alpha (p) \geq p$.
\end{proof}

In the context of quantuym dynamical semigroups, a projection $p$ satisfying $\alpha_t(p)\geq p$ has
been termed \textit{sub-harmonic} \cite{FaRe1}\,. We say the
projection $p$ is \textit{sub-harmonic} for the linear, normal,
unital and positive map $\alpha$ on $\mathcal M$ if $\alpha (p)
\geq p$. The previous proposition relates the sub-harmonic property of a
projection to the more geometric notion of invariance of the
supported face. This relationship can be immediately put to use:

\begin{proposition} If a family of projections is sub-harmonic for a linear, normal, unital and positive map $\alpha$ on $\mathcal M$,  then the infimum  of the family is sub-harmonic for $\alpha$.
 \end{proposition}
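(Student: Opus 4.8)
The plan is to avoid any direct computation with the operator inequality and instead exploit the geometric reformulation supplied by the preceding Proposition, which turns the algebraic condition $\alpha(p)\geq p$ into the invariance of the supported face. Since $\alpha$ is normal, unital and positive, it is the dual of the affine map $\nu$ of $\mathcal S$ into itself given by $\nu(\rho)=\rho\circ\alpha$, so we are exactly in the situation covered by that Proposition. Writing the given family as $\{p_i:\ i\in I\}$ with $\alpha(p_i)\geq p_i$ for each $i$, the Proposition tells me that every face $\mathcal F_{p_i}$ is $\nu$-invariant. It will therefore suffice to show that the face supported by $p:=\inf_i p_i$ is itself $\nu$-invariant, since the same Proposition then delivers $\alpha(p)\geq p$, which is the assertion.

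The crucial step is the set-theoretic identity $\mathcal F_{p}=\bigcap_i\mathcal F_{p_i}$. I would establish it via the support characterization $\rho(q)=1\iff s_\rho\leq q$: a normal state $\rho$ belongs to every $\mathcal F_{p_i}$ precisely when $s_\rho\leq p_i$ for all $i$, and by the defining universal property of the meet in the (complete) projection lattice of $\mathcal M$ this holds if and only if $s_\rho\leq \inf_i p_i=p$, i.e. if and only if $\rho\in\mathcal F_{p}$. This is the one place where the specific structure of the infimum of projections really enters.

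With the identity in hand, invariance is purely formal: if $\rho\in\mathcal F_{p}=\bigcap_i\mathcal F_{p_i}$ then $\rho\in\mathcal F_{p_i}$ for every $i$, so $\nu(\rho)\in\mathcal F_{p_i}$ for every $i$ by the $\nu$-invariance of the individual faces, whence $\nu(\rho)\in\bigcap_i\mathcal F_{p_i}=\mathcal F_{p}$. Thus $\mathcal F_{p}$ is $\nu$-invariant, and the preceding Proposition yields $\alpha(p)\geq p$. The degenerate case $p=0$ needs no separate treatment: then $\mathcal F_{p}=\varnothing$ is trivially invariant, and in any event $\alpha(0)=0\geq 0$ by linearity.

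The step I expect to require the most care is the face-intersection identity; the rest is bookkeeping. Its content is the equivalence between $s_\rho$ lying below each member of the family and $s_\rho$ lying below their infimum, which is exactly what characterizes the meet in the projection lattice, combined with the elementary fact that $\rho(q)=1$ is equivalent to $s_\rho\leq q$. I would take the latter as part of the definition of the support recalled in the preliminaries.
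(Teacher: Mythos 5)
Your proof is correct and takes essentially the same route as the paper's: both translate sub-harmonicity into $\nu$-invariance of the supported faces via the preceding Proposition, identify ${\mathcal F}_{\inf_i p_i}=\bigcap_i {\mathcal F}_{p_i}$, and observe that an intersection of $\nu$-invariant faces is $\nu$-invariant. The only difference is cosmetic: you justify the face-intersection identity explicitly through the support characterization $\rho(q)=1 \Leftrightarrow s_\rho \leq q$, a step the paper asserts without proof.
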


\begin{proof} If $\{{\mathcal F}_{\iota}:\; \iota \in I\}$ is a family of closed faces ${\mathcal F}_{\iota}$ of $\mathcal S$ then $\bigcap_{\iota}
{\mathcal F}_{\iota}$ is  a closed face and it is the largest closed face contained in each ${\mathcal F}_{\iota}$.
The support of $\bigcap_{\iota} {\mathcal F}_{\iota}$ is exactly $\inf \{ p_{\iota}: \; \iota \in I\}$,
where $p_{\iota}$ is the support of ${\mathcal F}_{\iota}$.   Moreover, if each ${\mathcal F}_{\iota}$ is $\nu$-invariant then so is the intersection.
\end{proof}

The corresponding statement for the supremum of such a family  has
been observed and proved directly (Ref.~\citen{Um}).\\
 For
projections $p$ that are \textit{super-harmonic}, i.e. $\alpha (p)
\leq p$ or equivalently $p^{\perp}$ is sub-harmonic, we have (in
reply to a question posed in Ref.~\citen{Fa}):
\begin{corollary}If a family of projections is super-harmonic for a linear, normal, unital and positive map $\alpha$ on $\mathcal M$,  then the
supremum of the family is super-harmonic for $\alpha$.
 \end{corollary}

 \begin{proof} $\sup \{ p :
\; p\in {\mathcal F}\}= (\inf\{ p^{\perp}: \; p\in {\mathcal
F}\})^{\perp}$ and $\inf\{ p^{\perp}: \; p\in {\mathcal F}\}$ is
sub-harmonic by the previous proposition.
\end{proof}
The corresponding statement for the infimum of a super-harmonic
family follows from the result for the supremum of a sub-harmonic
family by orthocomplementation as above. Thus,
\begin{theorem} The set of sub-harmonic and the set of super-harmonic
projections with respect to a linear, normal, unital and positive
map on $\mathcal M$ are both complete lattices.
\end{theorem}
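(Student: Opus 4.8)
The plan is to read the statement as the synthesis of the closure properties just established, resting on the classical fact that the projections of a von Neumann algebra form a complete lattice. First I would record the ambient structure: ordered by $p\leq q$, the projections of $\mathcal M$ admit an infimum $\bigwedge_\iota p_\iota$ and a supremum $\bigvee_\iota p_\iota$ for every family, with $0$ and $\mathbf 1$ as least and greatest elements. I would then collect the relevant closure facts for the sub-harmonic projections: Proposition 2.2 gives closure under arbitrary infima, and the supremum of a sub-harmonic family is again sub-harmonic by the result of Umanit\'a cited above. Both extremal projections are sub-harmonic, since $\alpha(0)=0\geq 0$ and $\alpha(\mathbf 1)=\mathbf 1\geq \mathbf 1$ by unitality.

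With these facts in hand, the sub-harmonic projections form a subset of the projection lattice that is closed under arbitrary meets and joins and contains both extremal elements; such a subset is a complete lattice under the inherited order, with the meet and join of any family computed exactly as in the ambient lattice---i.e.\ a complete sublattice. (Closure under arbitrary infima together with the top element $\mathbf 1$ would in fact already force completeness, the join of a family $S$ being $\bigwedge\{q:\ q\geq s\text{ for all }s\in S\}$; the supremum result merely identifies this join with the ambient one.) For the super-harmonic projections I would observe that orthocomplementation $p\mapsto p^\perp$ is an order-reversing bijection of the projection lattice which, by the equivalence $\alpha(p)\geq p\iff \alpha(p^\perp)\leq p^\perp$ noted earlier, carries the sub-harmonic projections onto the super-harmonic ones; an order-reversing bijection sends a complete lattice to a complete lattice (interchanging suprema and infima), so completeness transfers at once. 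The same conclusion follows directly from Corollary 2.1 and the subsequent remark.

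There is no genuinely hard step: all the mathematical content sits in the previously proved closure properties, and what remains is the bookkeeping that meet- and join-closure, with top and bottom present, upgrades a subposet of a complete lattice to a complete lattice. The only point deserving attention is the empty family, whose forced infimum $\mathbf 1$ and supremum $0$ must lie in each set; this is precisely what the unitality and positivity of $\alpha$ secure.
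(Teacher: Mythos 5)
Your proposal is correct and follows essentially the same route as the paper, which states the theorem with a bare ``Thus'' as the synthesis of Proposition 2.2, the cited result of Umanit\'a on suprema of sub-harmonic families, Corollary 2.1, and the orthocomplementation remark for infima of super-harmonic families. Your added bookkeeping (the extremal elements $0$ and $\mathbf 1$, the empty family, and the observation that meet-closure plus a top element already forces completeness) is sound and only makes explicit what the paper leaves implicit.
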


A \textit{minimal invariant face} is a closed $\nu_t$-invariant
face which does not properly contain another non-empty closed
$\nu_t$-invariant face. Equivalently, it is a face whose support
is a minimal sub-harmonic projection, i.e. a sub-harmonic projection
that is not larger than a non-zero sub-harmonic projection other
than itself. One can prove, and this goes back to --at least--
Davies (see Ref.~\citen{Da}, Theorem 3.8 of Sect. 6.3), that if the
minimal invariant face admits a stationary state then it is unique and its
support is the support of the face. Moreover (Ref.~\citen{Gr},
Proposition 3.4) the restriction of $\nu_t$ to the face is ergodic
(the Ces\`aro means converge to the stationary state).

\subsection*{A ``recurrent'' projection}
We define the {\em minimal recurrent} projection $r_o$ as the smallest projection which is larger than every minimal sub-harmonic projection.
 Equivalently, $r_o$ is the support of the smallest $\nu_t$-invariant face which contains every minimal $\nu_t$-invariant face.
By virtue of its definition and the  result mentioned above --to
the effect that the supremum of a family of sub-harmonic
projections is sub-harmonic-- it follows that the minimal recurrent
projection is sub-harmonic. Hence the directed family
$\alpha_t(r_o)$ which is bounded above by ${\mathbf 1}$ has a
lowest upper bound in ${\mathcal M}$ denoted by $x$ which is
positive and below $\mathbf 1$. Since $x= \lim_{t\to \infty}
\alpha_t(r_o)$ it follows that $\alpha_t(x)=x$ for every $t\geq
0$. Let $s[x]$ denote the support of $x$, that is the smallest
projection $p\in {\mathcal M}$ with $xp=x$. The following
treatment follows the lines of work by Mohari \cite{Mo2}\,.

\begin{lemma} If $\{\alpha_t:\;t\geq 0\}$ is CP, then $s[x]=\mathbf 1$.
\end{lemma}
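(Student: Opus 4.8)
The plan is to set $q:=s[x]$, prove that $q$ is \emph{super}-harmonic, and then play its complement $q^{\perp}$ against the defining property of $r_o$. First record the elementary inclusions $r_o\le x\le q$: since $\alpha_0(r_o)=r_o$ we have $r_o\le\sup_t\alpha_t(r_o)=x$, while $x\le\|x\|\,q\le q$ because $0\le x\le{\mathbf 1}$ and $q$ is the support of $x$. Hence it suffices to prove $q={\mathbf 1}$, and it suffices to rule out $q^{\perp}\neq0$.

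Next I would show that $q^{\perp}$ is sub-harmonic, equivalently (by the characterisation of Proposition~2.1) that ${\mathcal F}_{q^{\perp}}$ is $\nu_t$-invariant; note this step uses only positivity, normality, and the fixed-point identity $\alpha_t(x)=x$, not complete positivity. Let $\rho\in{\mathcal F}_{q^{\perp}}$, i.e. $\rho(q)=0$. From $0\le x\le q$ one gets $\rho(x)=0$, whence $\nu_t(\rho)(x)=\rho(\alpha_t(x))=\rho(x)=0$. Since $x\ge0$ has support $q$, the inequality $x\ge\varepsilon\,E_{(\varepsilon,\infty)}(x)$ forces $\nu_t(\rho)(E_{(\varepsilon,\infty)}(x))=0$ for every $\varepsilon>0$, and by normality $\nu_t(\rho)(q)=\sup_{\varepsilon>0}\nu_t(\rho)(E_{(\varepsilon,\infty)}(x))=0$. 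Thus $\nu_t(\rho)\in{\mathcal F}_{q^{\perp}}$, so $q^{\perp}$ is sub-harmonic and $q$ is super-harmonic.

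Now suppose, for contradiction, that $q\neq{\mathbf 1}$, so that $q^{\perp}\neq0$ is a non-zero sub-harmonic projection. Using condition (3) of Proposition~2.1 one checks that $\beta_t(a):=q^{\perp}\alpha_t(a)q^{\perp}$ restricts to a normal, unital, completely positive semigroup on the corner $q^{\perp}{\mathcal M}q^{\perp}$. Here complete positivity is meant to secure a normal $\beta_t$-stationary state, whose pull-back $\tilde\omega_0(a):=\omega_0(q^{\perp}aq^{\perp})$ is a normal $\nu_t$-stationary state supported in $q^{\perp}$. The support of a normal stationary state is sub-harmonic (if $\tilde\omega_0(y)=0$ for $y=\alpha_t(s^{\perp})\ge0$ with $s:=s_{\tilde\omega_0}$, then $s\,\alpha_t(s^{\perp})\,s=0$, giving $\alpha_t(s^{\perp})\le s^{\perp}$), so $s_{\tilde\omega_0}\le q^{\perp}$ is a non-zero sub-harmonic projection and dominates a minimal sub-harmonic projection $p_0\le q^{\perp}$. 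But then $p_0\le r_o\le q$ by the definition of $r_o$, while $p_0\le q^{\perp}$; hence $p_0\le q\wedge q^{\perp}=0$, contradicting $p_0\neq0$. Therefore $q={\mathbf 1}$.

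The crux, and the only place where complete positivity is indispensable, is the production of the minimal sub-harmonic projection $p_0$ beneath the non-zero sub-harmonic $q^{\perp}$. A naive Zorn's-lemma argument on non-zero sub-harmonic projections stalls, since the infimum of a decreasing chain (sub-harmonic by Proposition~2.2) may collapse to $0$; the way through is to use complete positivity of $\{\beta_t\}$ to obtain a normal stationary state on the corner and then descend to a minimal invariant face inside ${\mathcal F}_{q^{\perp}}$. This is precisely the content drawn from the Davies--Frigerio--Groh--Mohari theory invoked above, and it is where the recurrence encoded in $r_o$ does its work. I expect verifying this existence, rather than the surrounding lattice bookkeeping, to be the main obstacle.
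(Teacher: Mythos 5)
Your overall strategy matches the paper's proof: show that $s[x]^{\perp}$ is sub-harmonic, find a minimal sub-harmonic projection beneath it, and contradict the definition of $r_o$. Your closing chain $p_0\le r_o\le x\le q$ against $p_0\le q^{\perp}$ is just a repackaging of the paper's computation $q=qr_oq\le q\alpha_t(r_o)q\le qxq=0$, and your elementary inclusions are all correct. One step is genuinely different, and it is a real improvement: where the paper obtains sub-harmonicity of $s[x]^{\perp}$ from Mohari's Stinespring-based proposition in the Appendix --- the only place complete positivity enters the paper's argument --- you derive it from Proposition 2.1 by a predual argument: $\nu_t(\rho)(x)=\rho(\alpha_t(x))=\rho(x)=0$, then $x\ge\varepsilon E_{(\varepsilon,\infty)}(x)$ forces $\nu_t(\rho)(E_{(\varepsilon,\infty)}(x))=0$, and normality of $\nu_t(\rho)$ together with $E_{(\varepsilon,\infty)}(x)\uparrow s[x]$ gives $\nu_t(\rho)(s[x])=0$. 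This is valid and uses positivity and normality only; it shows that this particular step does not need CP at all.

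The gap sits exactly where you located the crux, but it is worse than you think: the claim that complete positivity ``secures'' a normal $\beta_t$-stationary state on the corner $q^{\perp}{\mathcal M}q^{\perp}$ is false in general. Take ${\mathcal M}={\mathcal B}(\ell^2)$ and $\alpha(a)=S^*aS$ with $S$ the unilateral shift (or, in continuous time, conjugation by the translation group on the square-integrable functions on the line): these maps are normal, unital and CP, yet admit no normal stationary state, since a stationary density matrix for the shift would satisfy $S\rho S^*=\rho$ and hence be supported in $\bigcap_n S^n\ell^2=\{0\}$. Conversely, in finite dimension a stationary state always exists, but by compactness of ${\mathcal S}$ and the Markov--Kakutani fixed point theorem, with no CP needed; so CP neither delivers the stationary state in general nor is it what delivers it in finite dimension, and as written your proof invokes CP only in a step where it cannot do the advertised work. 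Moreover, even granting $\omega_0$, you still owe an argument that the sub-harmonic support $s_{\tilde\omega_0}$ dominates a \emph{minimal} sub-harmonic projection: the descent to a minimal invariant face is precisely the kind of step that fails along decreasing chains in infinite dimension, as you yourself note. To be fair, the paper's own proof simply asserts the existence of a minimal sub-harmonic projection below $s[x]^{\perp}$; this is harmless in the finite-dimensional setting of Theorem 2.2, where every non-zero sub-harmonic projection dominates a minimal one by finite descent, but it is not available in general --- indeed in the shift-type examples above one expects no minimal invariant faces at all, whence $r_o=0$, $x=0$ and the conclusion fails, so some such implicit hypothesis is unavoidable. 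Your write-up should therefore either restrict to finite dimension (where your corner construction, with compactness replacing CP, does close the argument) or explicitly assume that every non-zero sub-harmonic projection dominates a minimal one, as the paper tacitly does.
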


\begin{proof} $s[x]^{\perp}$ is the largest projection  $q$ with
$xq=0$, and it is sub-harmonic by a result of Ref.~\citen{Mo2} quoted
in the appendix. Assume that  $s[x]\neq \mathbf 1$; then there is a minimal 
sub-harmonic non-zero projection $q$ with $q\leq s[x]^{\perp}$. One
has $xq=0$. By the definition of $r_o$, we have $q\leq r_o$ and
thus $q=qr_oq\leq q\alpha_t(r_o)q\leq qxq =0$,
which contradicts the assumption.
\end{proof}

Let ${\mathcal J}:= \{ a\in {\mathcal M}:\; \lim_{t\to\infty} \alpha_t(a^*a)=0\}$.
Since for each state $\rho$, one has the Cauchy-Schwarz inequality
\[|\rho ( \alpha_t(a^*b^*))|=|\rho (\alpha_t (ba))|=|\nu_t ( \rho )(ba)|\]
\[ \leq \sqrt{
\nu_t(\rho ) (bb^*)\nu_t (\rho ) (a^*a)} \leq \|b\| \sqrt{ \rho (
\alpha_t(a^*a))}\;;\]
we infer that ${\mathcal J}$ is a linear subspace of $\mathcal M$. If
$c\in {\mathcal M}$ and $a\in {\mathcal J}$, the same inequality applied to $b=a^*c^*c$ shows that  $ca\in \mathcal J$; thus ${\mathcal J}$ is a left-ideal. \\

If $\mathcal M$ is finite-dimensional (that is *-isomorphic to the direct sum of finitely many full matrix algebras) then, on the one hand  $s[x]=\mathbf 1$ implies that $x$ is invertible, and Mohari \cite{Mo2} has shown that if $x$ is invertible then $x=\mathbf 1$; and --on the other hand-- $\mathcal J$ is closed and there is  a projection such that $\mathcal J= \mathcal M\cdot z$. Then
\begin{theorem}
 If $\mathcal M$ is finite dimensional and $\{\alpha_t:\;t\geq 0\}$ is CP, then $\sup \{ \alpha_t (r_o): \; t\geq 0\} = {\mathbf 1}$. Moreover ${\mathcal J}={\mathcal M}\cdot r_o^{\perp}$ and $r_o$ is the smallest projection $p\in \mathcal M$ with $\lim_{t\to \infty}\alpha_t(p)=\mathbf 1$.
\end{theorem}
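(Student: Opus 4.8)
The plan is to establish the three assertions in turn, leaning on the Lemma and on the two finite-dimensional facts recorded just before the statement: that $s[x]=\mathbf 1$ forces $x$ to be invertible, and Mohari's observation that an invertible $x$ must equal $\mathbf 1$.

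For the first assertion, since the semigroup is CP the Lemma gives $s[x]=\mathbf 1$; in finite dimension this means $x$ is invertible, whence $x=\mathbf 1$ by Mohari. As $x=\sup\{\alpha_t(r_o):t\ge 0\}$ this is precisely the claim. I would also record here, for later use, that $t\mapsto \alpha_t(r_o)$ is increasing (sub-harmonicity of $r_o$ gives $\alpha_{s+t}(r_o)=\alpha_s(\alpha_t(r_o))\ge \alpha_s(r_o)$), so the supremum is in fact a limit: $\alpha_t(r_o)\to\mathbf 1$, and consequently $\alpha_t(r_o^{\perp})=\mathbf 1-\alpha_t(r_o)\to 0$ by unitality.

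For the identification $\mathcal J=\mathcal M\cdot r_o^{\perp}$, I would write $\mathcal J=\mathcal M\cdot z$ for the projection $z$ furnished above; membership $a\in\mathcal J$ is then equivalent to $a=az$. One inclusion is immediate: since $r_o^{\perp}$ is a projection and $\alpha_t(r_o^{\perp})\to 0$, we have $r_o^{\perp}\in\mathcal J$, so $r_o^{\perp}=r_o^{\perp}z$ and, taking adjoints, $r_o^{\perp}\le z$. For the reverse inclusion $z\le r_o^{\perp}$ I would exploit stationary states. Let $q$ be any minimal sub-harmonic projection; the face $\mathcal F_q$ is compact, convex and $\nu_t$-invariant, so in finite dimension it contains a stationary state $\omega$, and by the results of Davies and Groh quoted above $s_{\omega}=q$. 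For any $a\in\mathcal J$, stationarity gives $\omega(a^*a)=\omega(\alpha_t(a^*a))\to 0$, whence $\omega(a^*a)=0$ and therefore $aq=0$ (faithfulness of $\omega$ on $q\mathcal M q$). Applying this to $a=z\in\mathcal J$ yields $zq=0$, i.e. $q\le z^{\perp}$, for every minimal sub-harmonic $q$; taking the supremum over such $q$ gives $r_o\le z^{\perp}$, that is $z\le r_o^{\perp}$. Combining the two inclusions gives $z=r_o^{\perp}$.

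Finally, $r_o$ itself satisfies $\alpha_t(r_o)\to\mathbf 1$ by the first paragraph. If $p$ is any projection with $\alpha_t(p)\to\mathbf 1$, then $\alpha_t(p^{\perp})\to 0$ by unitality, so $p^{\perp}\in\mathcal J=\mathcal M\cdot r_o^{\perp}$; hence $p^{\perp}=p^{\perp}r_o^{\perp}$ and $p^{\perp}\le r_o^{\perp}$, i.e. $r_o\le p$. Thus $r_o$ is the smallest such projection. I expect the only genuine difficulty to be the reverse inclusion $z\le r_o^{\perp}$: the rest is bookkeeping with unitality and the given facts, whereas this step requires producing, on each minimal invariant face, a stationary state whose support is exactly the minimal sub-harmonic projection, and then using the invariance of its expectation values to annihilate the whole left-ideal $\mathcal J$ against that support.
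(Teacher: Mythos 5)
Your proposal is correct and follows essentially the same route as the paper: $s[x]=\mathbf 1$ via the Lemma, invertibility in finite dimension plus Mohari's Theorem 2.5 to get $x=\mathbf 1$, the inclusion $r_o^{\perp}\le z$ from $r_o^{\perp}\in\mathcal J$, and the reverse inclusion by pairing the ideal $\mathcal J$ against the stationary state supported on each minimal sub-harmonic projection. You additionally spell out two points the paper leaves implicit --- that monotonicity of $t\mapsto\alpha_t(r_o)$ makes the supremum a limit, and the final minimality argument via $p^{\perp}\in\mathcal J$ --- which is sound bookkeeping, not a deviation in method.
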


\begin{proof}   There is\cite{Sa}\, a projection $z\in\mathcal M$ with $\mathcal J=\mathcal M\cdot z$. Lemma \ref{pxp=p} implies that $x$ is invertible and Theorem 2.5 of Ref.~\citen{Mo2} gives $x=\mathbf 1$. Hence $r_o^{\perp} \in \mathcal J$ and thus $r_o^{\perp}\leq z$ or $r_o\geq z^{\perp}$. Suppose $p$ is a minimal sub-harmonic projection; there is a  stationary state $\omega$ in the minimal invariant face supported by $p$ and it follows (see the introduction) that it is unique and $s(\omega )=p$. Since $\omega (z)= \omega ( \alpha_t (z))\to 0$, we have $\omega (z^{\perp})=1$ and thus $p\leq z^{\perp}$. But then, by the definition of $r_o$, $r_o \leq z^{\perp}$. Thus $r_o=z^{\perp}$.
\end{proof}

\begin{note} Despite the claim in Ref.~\citen{BaNa}, p. 8, one cannot conclude from $\lim_{t\to \infty}\alpha_t (p)=\mathbf 1$ for a projection $p$, that $p$ is sub-harmonic. Simple examples can be given\cite{Za}\,.
 \end{note}

It follows from the Cauchy-Schwarz inequality for states that $\lim_{t\to \infty} \alpha_t(ar_o^{\perp})=\lim_{t\to \infty} \alpha_t(r_o^{\perp}a)= 0$ for every $a \in {\mathcal M}$ so that $\alpha_t (a) \asymp \alpha_t(r_oar_o)$ for large $t$ and every $a\in {\mathcal M}$. If $\{\nu_t:\; t\geq 0\}$ admits a faithful family of stationary states, then the minimal recurrent projection is the identity. This happens because for a stationary state $\omega$, one has $\omega (r_o^{\perp})= \omega ( \alpha_t(r_o^{\perp}))\downarrow 0$ and thus $\omega ( r_o^{\perp})=0$. However in this case there are results \cite{FrVe,Gr,Mo1} on the asymptotic behaviour of the semigroup. \\
Other recurrent projections have been considered. For example
(Ref.~\citen{Gr}, p. 407; Ref.~\citen{Um}), the supremum $r$ of the supports
of the stationary states (if any are available), which is then
sub-harmonic and above $r_o$.\\

There is no reason to expect that the above theorem holds in infinite dimension.

\section*{Acknowledgements}
We thank the organizers of the 30th Conference on Quantum
Probability and Related Topics, in Santiago, Chile. The support of
CONICET (PIP  11220080101741) is acknowledged. The first author is grateful
to M.E. Mart\'{\i}n Fern\'andez for generous support.

\section{Appendix}

We collect here  two  technical results used in the above proofs.

\begin{lemma} \label{pxp=p}
 For $x\in {\mathcal M}$ satisfying ${\mathbf 1} \geq x \geq 0$ and $p\in {\mathcal M}$  a projection one has:
\begin{itemize}
 \item[a)] the following five conditions are  equivalent: (1) $x\geq p$; (2)$pxp=p$; (3) $x=p+p^{\perp}xp^{\perp}$; (4)$ xp=p$; (5) $px=p$.
\item[b)]  the following four conditions are equivalent: (1) $p\geq x$; (2) $pxp=x$;  (3) $x=xp $; (4) $ x=px$ .
\end{itemize}
\end{lemma}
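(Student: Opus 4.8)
The plan is to reduce everything to a single positivity fact and then close a short cycle of purely algebraic implications. The fact I would isolate first is this: if $a\in{\mathcal M}$ satisfies $a\geq 0$ and $qaq=0$ for a projection $q$, then $aq=qa=0$. To see it, write $a=(a^{1/2})^2$; then $(a^{1/2}q)^*(a^{1/2}q)=qa^{1/2}a^{1/2}q=qaq=0$, which by the C*-identity forces $a^{1/2}q=0$, whence $aq=a^{1/2}(a^{1/2}q)=0$ and $qa=(aq)^*=0$. This is the only place where positivity is used in an essential way, rather than mere self-adjointness and algebra, so I expect it to be the crux of the whole lemma; everything downstream is bookkeeping.

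For part a) I would run the cycle $(1)\Rightarrow(2)\Rightarrow(4)\Leftrightarrow(5)\Rightarrow(3)\Rightarrow(1)$. For $(1)\Rightarrow(2)$, compressing the chain $p\leq x\leq{\mathbf 1}$ by $p$ (the map $b\mapsto pbp$ being order-preserving) gives $p\leq pxp\leq p$, hence $pxp=p$. For $(2)\Rightarrow(4)$, apply the isolated fact to $a={\mathbf 1}-x\geq 0$: since $p({\mathbf 1}-x)p=p-pxp=0$, we get $({\mathbf 1}-x)p=0$, that is $xp=p$. The equivalence $(4)\Leftrightarrow(5)$ is immediate on taking adjoints, using that $x$ and $p$ are self-adjoint. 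Granting both $(4)$ and $(5)$, the two-sided (Peirce) decomposition $x=pxp+pxp^{\perp}+p^{\perp}xp+p^{\perp}xp^{\perp}$, obtained by inserting $p+p^{\perp}={\mathbf 1}$ on each side, collapses: $pxp^{\perp}=(px)p^{\perp}=pp^{\perp}=0$, $p^{\perp}xp=p^{\perp}(xp)=p^{\perp}p=0$, and $pxp=p$, which yields $(3)$. Finally $(3)\Rightarrow(1)$ is trivial, since $x-p=p^{\perp}xp^{\perp}\geq 0$.

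For part b) I would avoid redoing any work and instead orthocomplement. Replacing $x$ by ${\mathbf 1}-x$ (still between $0$ and ${\mathbf 1}$) and $p$ by $p^{\perp}$ turns the hypothesis $p\geq x$ into $({\mathbf 1}-x)\geq p^{\perp}$, which is exactly hypothesis $(1)$ of part a) for the pair $({\mathbf 1}-x,\,p^{\perp})$. Translating the conclusions of a) back for this pair: a)(4), namely $({\mathbf 1}-x)p^{\perp}=p^{\perp}$, becomes $xp^{\perp}=0$, i.e. $x=xp$, which is b)(3); a)(5) becomes $x=px$, which is b)(4); and a)(3), namely ${\mathbf 1}-x=p^{\perp}+p({\mathbf 1}-x)p$, becomes $x=pxp$, which is b)(2). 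Hence b) is a direct corollary of a), and the four conditions of b) are mutually equivalent as claimed. The only subtlety to watch is that the substitution must genuinely land in the hypotheses of a), which it does because $0\leq x\leq{\mathbf 1}$ is self-dual under $x\mapsto{\mathbf 1}-x$ and $p\mapsto p^{\perp}$ is again a projection.
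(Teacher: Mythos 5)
Your proposal is correct and takes essentially the same route as the paper's own proof: your isolated positivity fact ($a\geq 0$ and $qaq=0$ imply $aq=qa=0$, via the square root and the C*-identity) is exactly the paper's step $p({\mathbf 1}-x)p=0\Rightarrow({\mathbf 1}-x)^{1/2}p=0\Rightarrow xp=p$, and your derivation of b) from a) by the substitution $x\mapsto{\mathbf 1}-x$, $p\mapsto p^{\perp}$ is precisely the paper's one-line reduction. The only difference is cosmetic: you arrange the implications of a) as an explicit cycle with the Peirce decomposition spelled out, where the paper states the same collapse more tersely.
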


\begin{proof} a): Given ${\mathbf 1} \geq x \geq p$, multiplication from left and right by $p$ gives $p\geq pxp\geq p$ and thus $pxp=p$.\\
If $pxp=p$ then $p( {\mathbf 1} -x)p=0$ which implies $( {\mathbf
1} -x)^{1/2}p=0$ and thus $( {\mathbf 1} -x)p=0$ or $xp=p$; taking
adjoints $p=px$.\\
 And $xp=p$ or $px=p$ implies $pxp=p$.\\
 Finally either of the equivalent conditions (4) or (5) imply that
 $  x-p= p^{\perp}xp^{\perp}\geq 0$.\\

b):  $p\geq x$ if and only if $p^{\perp}\leq {\mathbf 1}-x$. Apply
a).
\end{proof}

The following crucial observation and the proof, repeated here for convenience, are due to Mohari\cite{Mo2}\,.

\begin{proposition} [Mohari] Suppose $\alpha: {\mathcal M}\to {\mathcal M}$ is linear, unital, normal and completely positive and
 $x \in {\mathcal M}$ is positive with $\alpha (x)=x$. Then the support of $x$ is super-harmonic.
 \end{proposition}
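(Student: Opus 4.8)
The plan is to reduce the assertion to the single operator inequality $\alpha(p)\le p$, where $p:=s[x]$ denotes the support projection of $x$; by the definition of super-harmonic this \emph{is} the claim. Since $\alpha$ is linear, rescaling $x$ by a positive constant alters neither its support nor the relation $\alpha(x)=x$, so I may assume $0\le x\le{\mathbf 1}$. With this normalization the fractional powers $x^{1/2^k}$ increase to the support: for $t\in(0,1]$ the sequence $t^{1/2^k}$ increases to $1$ while $0$ stays fixed, whence $x^{1/2^k}\uparrow p$ in the $w^*$-topology.

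The engine is the Kadison--Schwarz inequality $\alpha(b)^*\alpha(b)\le\alpha(b^*b)$, which holds because $\alpha$ is unital and completely positive (hence $2$-positive). Applied to the self-adjoint element $b=x^{1/2^k}$, for which $b^2=x^{1/2^{k-1}}$, it gives
\[
\alpha\!\left(x^{1/2^k}\right)^2\;\le\;\alpha\!\left(x^{1/2^{k-1}}\right).
\]
I would then prove $\alpha(x^{1/2^k})\le x^{1/2^k}$ by induction on $k$. The base case $k=0$ is exactly the hypothesis $\alpha(x)=x$. For the inductive step the induction hypothesis yields $\alpha(x^{1/2^{k-1}})\le x^{1/2^{k-1}}=\big(x^{1/2^k}\big)^2$, and combining this with the displayed inequality gives $\alpha(x^{1/2^k})^2\le\big(x^{1/2^k}\big)^2$; operator monotonicity of the square root upgrades this to $\alpha(x^{1/2^k})\le x^{1/2^k}$, closing the induction.

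Finally I would pass to the limit $k\to\infty$. By normality of $\alpha$ and the $w^*$-convergence $x^{1/2^k}\uparrow p$, the left-hand sides converge to $\alpha(p)$ and the right-hand sides to $p$; since the positive cone is $w^*$-closed the inequality is preserved in the limit, so $\alpha(p)\le p$, which says that $p=s[x]$ is super-harmonic. The step I expect to demand the most care is this passage to the limit: one must verify that the chosen fractional powers genuinely increase to the support projection and that the $w^*$-continuity of $\alpha$ is legitimately invoked on this bounded monotone net. The algebraic core --- Kadison--Schwarz together with the operator monotonicity of $t\mapsto t^{1/2}$ --- is otherwise routine.
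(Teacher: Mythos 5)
Your proof is correct, and it takes a genuinely different route from the paper's. The paper argues via the Stinespring dilation: writing $\alpha(a)=V^*\pi(a)V$ with $\pi$ a normal $*$-homomorphism and $V$ an isometry, it sets $z=s[x]^{\perp}$, computes $0=zxz=z\alpha(x)z=(yVz)^*(yVz)$ with $y=\pi(x)^{1/2}$, deduces that $Vz$ maps into the kernel of $\pi(x)$ so that $\pi(s[x])Vz=0$, hence $\alpha(s[x])z=0$, and concludes $\alpha(s[x])\leq s[x]$ from Lemma~\ref{pxp=p}. You instead iterate the Schwarz inequality along the dyadic roots $x^{1/2^k}$, and every step of your argument checks out: since $\alpha(x^{1/2^k})\geq 0$, extracting square roots from $\alpha(x^{1/2^k})^2\leq (x^{1/2^k})^2$ via operator monotonicity of $t\mapsto t^{1/2}$ is legitimate; the spectral calculus gives $x^{1/2^k}\uparrow s[x]$ (a bounded increasing net, converging even strongly, with the support projection lying in $\mathcal M$); and normality of a positive map is precisely preservation of suprema of bounded increasing nets, so $\alpha(x^{1/2^k})\uparrow\alpha(s[x])$ while $\alpha(x^{1/2^k})\leq x^{1/2^k}\leq s[x]$ yields $\alpha(s[x])\leq s[x]$ in the limit. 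As to what each approach buys: the paper's dilation proof is short and geometric but is tied to complete positivity, since Stinespring requires it; your proof uses the Schwarz inequality only on \emph{self-adjoint} elements $b=x^{1/2^k}$, and Kadison's inequality $\alpha(b)^2\leq\alpha(b^2)$ for self-adjoint $b$ already holds for unital \emph{positive} maps, so your argument in fact establishes the proposition under positivity alone --- a strictly weaker hypothesis than the CP you invoke (you cite $2$-positivity, which complete positivity of course supplies, but it is not needed). This is a worthwhile strengthening, since the rest of the paper is careful to distinguish which results need CP and which need only positivity.
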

\begin{proof}
 We may assume ${\mathcal M}$ is a von Neumann algebra on a Hilbert space ${\mathcal K}$. By the Stinespring Representation Theorem there is
 a normal $*$-homomorphism $\pi$ of $\mathcal M$ into ${\mathcal B}({\mathcal H})$ (the algebra of bounded linear operators on a Hilbert
 space ${\mathcal H}$) and an isometry $V: {\mathcal K} \to {\mathcal H}$ such that $\alpha (a)= V^*\pi (a) V$ for all $a \in {\mathcal M}$.
 Recall that the support of a self-adjoint element $a$ is the smallest projection $p$ of ${\mathcal M}$ such that $pa=a$ (equivalently $ap=a$).
If $\mathcal M \subset {\mathcal B}({\mathcal K})$ then the support coincides with the smallest projection $q \in {\mathcal B}({\mathcal K})$
such that $qa=a$ (Proposition 1.10.4 of \cite{Sa}). Now if $x$ satisfies the hypothesis, $s$ is its support and $z=s^{\perp}$,
then $0=zxz= z\alpha (x) z= zV^*\pi (x) Vz = (yVz)^*(yVz)$ where $y= \sqrt{\pi (y)}$. Thus $yVz=0$ and hence $\pi (x) Vz=0$.
The support of $\pi (x) $ is $\pi (s)$ and since $Vz$ maps ${\mathcal K}$ into the kernel of $\pi (x)$, we conclude
that $\pi (s) Vz=0$. But then, $\alpha (s) z= V^*\pi (s) Vz =0$ or $\alpha (s)=\alpha (s) s$ which by the Lemma above
implies $\alpha (s) \leq s$.
\end{proof}

\end{document}